\pgfplotsset{compat=newest}
\theoremstyle{plain}
\newtheorem{lemma}{Lemma}
\newtheorem{prop}{Proposition}
\newtheorem{cor}{Corollary}
\theoremstyle{definition}
\begin{document}

\title{\huge{Covert Wireless Communication in Presence of \\a Multi-Antenna Adversary and Delay Constraints}}

\author{{Khurram~Shahzad, \IEEEmembership{Student Member, IEEE}, Xiangyun~Zhou, \IEEEmembership{Senior Member, IEEE} and Shihao~Yan, \IEEEmembership{Member, IEEE}}

\thanks{Copyright (c) 2015 IEEE. Personal use of this material is permitted. However, permission to use this material for any other purposes must be obtained from the IEEE by sending a request to pubs-permissions@ieee.org.}

\thanks{K. Shahzad and X. Zhou are with the Research School of Electrical, Energy and Materials Engineering, Australian National University, Canberra, ACT 2601, Australia (emails: \{khurram.shahzad, xiangyun.zhou\}@anu.edu.au).}

\thanks{S. Yan is with the School of Engineering, Macquarie University, Sydney, NSW 2109, Australia (e-mail: shihao.yan@mq.edu.au).}}


\maketitle

\begin{abstract}
Covert communication hides the transmission of a message from a watchful adversary, while ensuring reliable information decoding at the receiver, providing enhanced security in wireless communications. In this letter, covert communication in the presence of a multi-antenna adversary and under delay constraints is considered. Under the assumption of quasi-static wireless fading channels, we analyze the effect of increasing the number of antennas employed at the adversary on the achievable throughput of covert communication. It is shown that in contrast to a single-antenna adversary, a slight increase in the number of adversary's antennas drastically reduces the covert throughput, even for relaxed covertness requirements.
\end{abstract}
\small
\begin{keywords}
\textbf{Physical layer security, covert communication, multiple antennas.}
\end{keywords}

\normalsize
\IEEEpeerreviewmaketitle
\section{Introduction}
The broadcast nature of wireless transmission makes it prone to unauthorized access, raising serious concerns about its security and privacy. With an ever-increasing dependence on wireless devices not only for communication but also activities related to health, finance and sharing private information, there is a renewed interest in the security and privacy of wireless transmissions. Circumstances exist where instead of protecting the information content of the transmission, it is imperative to hide the transmission itself. In such situations, traditional security schemes employing cryptography and physical layer security \cite{sean_book} are deemed insufficient, as such schemes cause suspicion, drawing further probing from an adversary. Hiding communications in sensitive or hostile environments is of paramount importance to military and law enforcement agencies. On the other hand, detecting any malicious covert communications is also highly desired by law enforcement and cyber task forces since even the presence of such activities offers sufficient incentive for them to take action \cite{shihao2019mag}. In all such scenarios, covert communication approaches \cite{commag15bash} offer a highly viable solution, with applications not only of interest to military organizations but general public as well.

Recent research efforts have explored this nascent approach to security under different communication scenarios establishing the fundamental limits in additive white Gaussian noise (AWGN) channels \cite{bash_jsac}, under channel and noise uncertainty \cite{shahzad_vtc,biao_cc}, use of jamming and artificial noise \cite{tamara_jammer_2017,shahzad2018achieving} and under relay networks \cite{hu2018covert_relay}. Optimality of Gaussian signalling in covert communication has been analyzed in \cite{yan2019gaussian}, while \cite{yan2019tsp} offers a first study in considering a UAV as the transmitter in the context of covert communications. Multi-antenna covert communications under AWGN channels has been considered in \cite{mimo_koksal}, while a recent work in \cite{zheng2019multi} has considered their performance in random wireless networks considering both centralized and distributed antenna systems at the transmitter.

The aforementioned works consider covert communication under the assumption of an infinite number of channel uses. However, limited storage resources and requirements of quick updates in modern communication systems require a finite, sometimes small, number of channel uses, and hence the results in the infinite blocklength regime do not hold anymore. Under finite blocklength, covert communication has been considered in the literature under AWGN and fading channels \cite{yan2019delay,shu2019delay,tang2018covert}, where a single-antenna adversary was considered.

In this work, we consider covert communication under fading channels with a finite blocklength, in presence of a multi-antenna adversary. Equipping the adversary with multiple antennas makes him a stronger adversary in fading conditions, yielding the task of communicating covertly even harder. Although \cite{zheng2019multi} has considered achieving covertness for a multi-antenna transmitter in the presence of multiple adversaries, the adversaries are non-colluding, whereas we consider an adversary utilizing a centralized detection approach. Furthermore, the analysis in \cite{zheng2019multi} is presented under the assumption of an infinite blocklength as compared to our finite blocklength assumption. We analyze the achievable throughput under strict delay constraints and study the impact of adversary's multiple antennas on covertness, when the covert communication pair is equipped with single antennas. In particular, we show that in contrast to a single antenna case, the covert throughput quickly reduces to zero with a slight increase in number of antennas at the adversary, even for highly relaxed covertness requirements.

\section{System Model}
We consider a scenario where the adversary, Willie, uses $M \geq 1$ antennas for detection, whereas each of Alice and Bob is equipped with a single antenna. A communication slot is a block of time in which transmission of a message from Alice to Bob is complete. If Alice transmits in a slot, she sends a finite number, $L_M$, of symbols to Bob, with a symbol index of $l$, while Willie observes silently, looking to decide whether Alice has transmitted or not. The maximum number of symbols in a slot is denoted by $L_{\text{max}}$, and hence, $L_M \leq L_{\text{max}}$. We assume that Alice's transmitted signal samples are independent, with a distribution given by $\mathcal{CN}(0, P_a )$ while the distribution of AWGN at Bob's receiver is given by $\mathcal{CN}(0, 1)$. The additive noise samples at the different antennas at Willie are also considered to be independent, with a distribution of $\mathcal{CN}(0, 1)$. Furthermore, it is assumed that Willie's received signals and his noise are independent.

The wireless links from Alice to Bob and Alice to Willie are subject to quasi-static Rayleigh fading channels, which constitute a suitable model for covert communications scenario under NLOS communications, and is commonly adopted in the literature \cite{tamara_jammer_2017,tang2018covert}. Resultantly, corresponding to a large coherence time, the channel coefficients remain constant in a slot and change independently from one slot to the next. The vector of channel coefficients from Alice to Willie's $M$ antennas is denoted by $\mathbf{h}_{aw} \in \mathbb{C}^{M \times 1}$, and for each entry $h_{aw}$ of $\mathbf{h}_{aw}$, the mean of $|h_{aw}|^2$ is denoted by $1/\lambda$. We follow the common assumption that a secret of sufficient length is shared between Alice and Bob \cite{bash_jsac}, which is unknown to Willie. Employing random coding arguments, Alice generates codewords by independently drawing symbols from a zero-mean complex Gaussian distribution with a variance of $P_a$. Here, each codebook is known to Alice and Bob and is used only once. When Alice transmits in a slot, she selects the codeword corresponding to her message and transmits the resulting sequence.

Let $\mathbf{Y}=[\mathbf{y}_1, \mathbf{y}_2, \dots, \mathbf{y}_{L_M}] \in \mathbb{C}^{M \times L_M}$ be the matrix containing Willie's observed signals at $M$ antennas. Willie faces a binary hypothesis test regarding Alice's actions and we denote Willie's hypotheses of Alice transmitting or not by $H_1$ and $H_0$, respectively. The observation model at Willie regarding Alice's transmission state can be expressed as
\begin{equation}\label{a1}
\begin{cases}
H_0 : \mathbf{y}_l \sim \mathcal{CN}(\mathbf{0}, \mathbf{I}_M), \\
H_1: \mathbf{y}_l \sim \mathcal{CN}(\mathbf{0}, P_a \mathbf{h}_{aw}\mathbf{h}_{aw}^H + \mathbf{I}_M) ,
\end{cases}
\end{equation}
where $\mathbf{I}_M$ is an $M \times M$ identity matrix. We assume that in a given slot, $P_a$ is fixed and known to Willie. Under the assumption of an equal probability of Alice transmitting or not in a slot, achieving covertness requires $\mathbb{P}_{FA} + \mathbb{P}_{MD} \geq 1 - \epsilon$ for some arbitrarily small $\epsilon$ \cite{bash_jsac,yan2019delay}. Here, $\mathbb{P}_{FA}$ and $\mathbb{P}_{MD}$ denote the Probability of False Alarm and Probability of Missed Detection, respectively.

As per \cite{polyanskiy2010channel}, the decoding error probability at Bob is not negligible for finite blocklengths. For a given decoding error probability, $\delta$, the channel coding rate (in bits per channel use) for a given channel realization and for duration of $L_M$ symbols is given as \cite{ozcan2013throughput}

\begin{equation}
\begin{aligned}
R &\approx \log_2 \left(1 + P_a |h_{ab}|^2 \right) \\
&\qquad \qquad - \sqrt{\frac{1}{L_M} \left(1 - \frac{1}{(1 + P_a |h_{ab}|^2)^2} \right)} \frac{Q^{-1}(\delta)}{\ln 2} ,
\end{aligned}
\end{equation}
where $h_{ab}$ is the fading channel coefficient from Alice to Bob and $Q^{-1}(\cdot)$ is the inverse Gaussian Q-function. Since the transmission rate achieved here for a finite blocklength is less than the Shannon capacity, the decoding error probability, $\delta$, is considered to be less than $0.5$. We consider the amount of information (in bits), given by $ L_M R (1 - \delta)$, as our performance metric while $\mathbb{P}_{FA} + \mathbb{P}_{MD} \geq 1 - \epsilon$ is the covertness constraint. Alice being part of a wider network, periodically broadcasts pilot signals enabling other users to estimate their channels. Though this enables Bob to know his instantaneous channel from Alice, facilitating his message decoding, it also gives Willie the channel information from Alice. For ease of exposition, we assume that Alice is also aware of her channel to Bob. In comparison to \cite{zheng2019multi}, we note that in case of multiple transmit antennas at Alice, the assumption of CSI availability at all the antennas is even harder to justify since training by covert receivers may expose their existence and violate covertness requirements. We also note that while under infinite blocklength, having perfect channel state information (CSI) at the transmitter and receiver results in no decoding errors at the receiver, this is not the case under finite blocklength scenario, where decoding errors still occur even in the presence of perfect CSI at both the transmitter and the receiver.

\section{Detection at Willie}
Since Willie is aware of his channels from Alice, her transmit power and his own receiver's noise variance, he is able to design an optimal detector, which represents the worst case scenario from the covert communication design perspective. In the following, we present Willie's optimal detector and the corresponding detection error probabilities.

\begin{lemma}
The optimal detector at Willie has the decision rule given as
 \begin{equation}
\|\mathbf{h}_{aw}^H \mathbf{Y} \|^2 \underset{H_0}{\overset{H_1}{\gtrless}} \theta^* ,
\end{equation}
where
\begin{equation}
\theta^* = \frac{L_M}{2} \left(\frac{1}{P_a}+ \|\mathbf{h}_{aw} \|^2  \right) \ln \left(P_a \|\mathbf{h}_{aw} \|^2 + 1 \right)
\end{equation}
is the optimal decision threshold.
\end{lemma}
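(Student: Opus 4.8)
The plan is to derive Willie's minimum-error detector as a likelihood ratio test (LRT) and then reduce it to the stated matched-filter energy form. Since Alice transmits with probability $1/2$ and the covertness metric is $\mathbb{P}_{FA}+\mathbb{P}_{MD}$, the detector minimizing the total error probability is the Bayesian LRT whose threshold equals the prior ratio, i.e. the rule that decides $H_1$ whenever the log-likelihood ratio of the observation $\mathbf{Y}$ is nonnegative. Because the $L_M$ columns $\mathbf{y}_1,\dots,\mathbf{y}_{L_M}$ are independent, the joint likelihood under each hypothesis factorizes; using the circularly symmetric complex Gaussian density with covariance $\mathbf{I}_M$ under $H_0$ and $\boldsymbol{\Sigma}_1 = P_a\mathbf{h}_{aw}\mathbf{h}_{aw}^H+\mathbf{I}_M$ under $H_1$, I would write the log-likelihood ratio as a sum over $l$ of a constant term involving $\ln\det\boldsymbol{\Sigma}_1$ and a quadratic-form term $\mathbf{y}_l^H(\mathbf{I}_M-\boldsymbol{\Sigma}_1^{-1})\mathbf{y}_l$.

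The quantities $\det\boldsymbol{\Sigma}_1$ and $\boldsymbol{\Sigma}_1^{-1}$ are the crux, and here I would exploit the rank-one structure of $\boldsymbol{\Sigma}_1$. The matrix determinant lemma gives $\det\boldsymbol{\Sigma}_1 = 1+P_a\|\mathbf{h}_{aw}\|^2$, while the Sherman--Morrison identity gives $\boldsymbol{\Sigma}_1^{-1}=\mathbf{I}_M-\frac{P_a\mathbf{h}_{aw}\mathbf{h}_{aw}^H}{1+P_a\|\mathbf{h}_{aw}\|^2}$, so that $\mathbf{I}_M-\boldsymbol{\Sigma}_1^{-1}$ is itself the rank-one matrix $\frac{P_a}{1+P_a\|\mathbf{h}_{aw}\|^2}\mathbf{h}_{aw}\mathbf{h}_{aw}^H$. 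Each quadratic form then collapses to $\frac{P_a}{1+P_a\|\mathbf{h}_{aw}\|^2}|\mathbf{h}_{aw}^H\mathbf{y}_l|^2$, and summing over $l$ together with the identification $\sum_l|\mathbf{h}_{aw}^H\mathbf{y}_l|^2=\|\mathbf{h}_{aw}^H\mathbf{Y}\|^2$ isolates the scalar sufficient statistic. This singles out the matched filter $\mathbf{h}_{aw}^H\mathbf{Y}$ followed by an energy measurement as the only data-dependent quantity, which is precisely the left-hand side of the claimed decision rule.

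Finally, I would move the constant $L_M\ln(1+P_a\|\mathbf{h}_{aw}\|^2)$ to the right and divide through by the coefficient $\frac{P_a}{1+P_a\|\mathbf{h}_{aw}\|^2}$; since this coefficient is strictly positive, the inequality direction is preserved and the test retains the form $\|\mathbf{h}_{aw}^H\mathbf{Y}\|^2\gtrless\theta^*$, with the remaining constant terms collecting into the decision threshold $\theta^*$. I expect the main obstacle to be the linear-algebra reduction rather than the probabilistic argument: one must correctly apply the determinant and Sherman--Morrison identities to the rank-one perturbation and carefully track the positive scalar multiplying the energy statistic, since it is this scalar that both produces the clean quadratic test and fixes the threshold. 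The factorization over the $L_M$ independent symbols and the equal-prior reduction to a zero log-threshold are then routine.
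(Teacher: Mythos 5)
Your plan is, step for step, the paper's own proof: a likelihood ratio test with equal priors (hence a zero log-threshold), factorization over the $L_M$ independent columns, the matrix determinant lemma for $\det\boldsymbol{\Sigma}_1 = 1+P_a\|\mathbf{h}_{aw}\|^2$, Sherman--Morrison (the paper invokes the Woodbury identity, which specializes to the same rank-one formula) for $\boldsymbol{\Sigma}_1^{-1}$, collapse of the quadratic forms to $\|\mathbf{h}_{aw}^H\mathbf{Y}\|^2\big/\left(\tfrac{1}{P_a}+\|\mathbf{h}_{aw}\|^2\right)$, and a final rearrangement into a threshold test on the matched-filter energy.

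There is, however, one concrete discrepancy you should be aware of. Carried out with the standard circularly symmetric complex Gaussian density $\pi^{-M}|\mathbf{K}|^{-1}\exp\left(-\mathbf{y}^H\mathbf{K}^{-1}\mathbf{y}\right)$, your computation gives the constant term $L_M\ln\left(1+P_a\|\mathbf{h}_{aw}\|^2\right)$ (exactly as you state), and hence a threshold of $L_M\left(\tfrac{1}{P_a}+\|\mathbf{h}_{aw}\|^2\right)\ln\left(1+P_a\|\mathbf{h}_{aw}\|^2\right)$ --- twice the $\theta^*$ claimed in the lemma. The paper obtains the extra factor of $\tfrac{1}{2}$ because it normalizes the densities by $\sqrt{|\mathbf{K}|}$ rather than $|\mathbf{K}|$, which produces $\tfrac{L}{2}\ln\left(P_a\|\mathbf{h}_{aw}\|^2+1\right)$ in $\mathcal{L}_1(\mathbf{Y})$. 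So your derivation, executed as written, will not reproduce the stated constant; the mismatch lies in the normalization convention for the complex Gaussian, not in your linear algebra or your probabilistic reasoning. Since the threshold is only rescaled by a positive constant, the structure of the optimal detector (matched filtering followed by an energy comparison) is unaffected either way, but the specific value of $\theta^*$ --- which is subsequently substituted into the detection error probabilities of Lemma 2 --- does depend on which normalization is used, so you should flag and resolve this factor of two rather than silently absorbing it.
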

\begin{proof}
Since Willie has complete statistical knowledge of his observations, hence resorting to the Neyman Pearson criterion \cite{lehmann2006testing}, the optimal test for Willie to minimize his detection error probability is the likelihood ratio test. Under $H_0$, the pdf of the observation matrix at Willie, $\mathbf{Y}$, is given by
\begin{equation}
f(\mathbf{Y}| H_0) = \prod_{l=1}^{L} \frac{1}{\pi^M \sqrt{|K_0|}}\exp \left[- \mathbf{y}_l^H K_0^{-1}\mathbf{y}_l  \right],
\end{equation}
where $K_0 \triangleq \mathbf{I}_M$ is the covariance matrix of Willie's observations under $H_0$. We have
\begin{equation}
\begin{aligned}
f(\mathbf{Y}| H_0) &= \prod_{l=1}^{L} \frac{1}{\pi^M } \exp \left[- \mathbf{y}_l^H \mathbf{y}_l \right] = \frac{1}{\pi^{ML}} \exp \left[- \text{tr} (\mathbf{Y}\mathbf{Y}^H)  \right],
\end{aligned}
\end{equation}
where $\text{tr}(\cdot)$ denotes the trace of a matrix. By taking the logarithm of the pdf under $H_0$, we have
\begin{equation}
\mathcal{L}_0(\mathbf{Y}) = - ML \ln (\pi) - \text{tr} (\mathbf{Y}\mathbf{Y}^H) .
\end{equation}
Similarly, the pdf of $\mathbf{Y}$ under $H_1$ is written as
\begin{equation}\label{eqabc}
\begin{aligned}
f(\mathbf{Y}| H_1, \mathbf{h}, P_a) &= \prod_{l=1}^{L} \frac{1}{\pi^M \sqrt{|K_1|}}\exp \left[- \mathbf{y}_l^H K_1^{-1}\mathbf{y}_l  \right] \\
&= \frac{1}{\pi^{ML} \sqrt{|K_1|^L}} \exp \left[ -\text{tr}\left(K_1^{-1} \mathbf{Y}\mathbf{Y}^H  \right) \right]
\end{aligned}
\end{equation}
where $K_1 \triangleq P_a \mathbf{h}_{aw}\mathbf{h}_{aw}^H + \mathbf{I}_M$ is the covariance matrix of Willie's observations under $H_1$. Here, we have
\begin{equation}
|K_1| =  P_a \|\mathbf{h}_{aw}\|^2 + 1  ,
\end{equation}
and
\begin{equation}
K_1^{-1} = \mathbf{I}_M - \frac{\mathbf{h}_{aw}\mathbf{h}_{aw}^H}{\frac{1}{P_a}+ \|\mathbf{h_{aw}}\|^2} ,
\end{equation}
where $K_1^{-1}$ is obtained using Woodbury Matrix Identity for matrix inversion \cite{strang1993introduction}.

Putting in the expressions of $|K_1|$ and $K_1^{-1}$ in (\ref{eqabc}) and taking the logarithm of the pdf under $H_1$,
\begin{equation}
\begin{aligned}
\mathcal{L}_1(\mathbf{Y}) = &-ML \ln (\pi) - \frac{L}{2} \ln \left( P_a \|\mathbf{h}_{aw} \|^2 + 1 \right) \\
&- \text{tr}\left( \mathbf{Y}\mathbf{Y}^H \right) + \frac{\|\mathbf{h}_{aw}^H \mathbf{Y} \|^2}{\left(\frac{1}{P_a}+ \|\mathbf{h}_{aw} \|^2  \right) }.
\end{aligned}
\end{equation}
The log likelihood ratio (LLR) can be thus written as
\begin{equation}
\begin{aligned}
\text{LLR} &= \mathcal{L}_1(\mathbf{Y}) - \mathcal{L}_0(\mathbf{Y})  \\
&= \frac{\|\mathbf{h}_{aw}^H \mathbf{Y} \|^2}{\left(\frac{1}{P_a}+ \|\mathbf{h}_{aw} \|^2  \right) } - \frac{L}{2} \ln \left(P_a \|\mathbf{h}_{aw} \|^2 + 1 \right) .
\end{aligned}
\end{equation}
Comparing the LLR to a threshold results in the following optimal decision rule
\begin{equation}
\frac{\|\mathbf{h}_{aw}^H \mathbf{Y} \|^2}{\left(\frac{1}{P_a}+ \|\mathbf{h}_{aw} \|^2  \right) } - \frac{L}{2} \ln \left(P_a \|\mathbf{h}_{aw} \|^2 + 1 \right)  \underset{H_0}{\overset{H_1}{\gtrless}} \varphi,
\end{equation}
where $\varphi = 0$ for the LLR under the assumption of equal probability of whether Alice transmits or not. The optimal decision rule can then be obtained by a rearrangement.
\end{proof}

We note from (3) that the optimal detector at Willie is a maximum ratio combiner \cite{goldsmith2005wireless}, which assigns weightage to the observations at different antennas as per the corresponding channel gains. Thus the antennas with better channel gain from Alice have higher contribution in the decision statistic in (3). We next present the detection error probabilities at Willie under the optimal detector.

\begin{lemma}
The detection error probabilities at Willie, i.e., $\mathbb{P}_{FA}$ and $\mathbb{P}_{MD}$ are given as
\begin{equation}
\mathbb{P}_{FA} = 1 - \frac{\gamma\left(L_M, \frac{\theta^*}{\| \mathbf{h}_{aw} \|^2 }  \right)}{\Gamma\left(L_M \right)},
\end{equation}
and
\begin{equation}
\mathbb{P}_{MD} =  \frac{\gamma\left(L_M, \frac{\theta^*}{\|\mathbf{h}_{aw} \|^2 (\|\mathbf{h}_{aw} \|^2 P_a + 1)}  \right)}{\Gamma\left(L_M \right)},
\end{equation}
respectively, where $\gamma(a,b)$ is the lower incomplete Gamma function, $\Gamma(x)$ is the complete Gamma function, and $\theta^*$ is the optimal threshold of Willie's detector given in (4).
\end{lemma}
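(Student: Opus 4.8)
The plan is to characterize the distribution of Willie's decision statistic $T \triangleq \|\mathbf{h}_{aw}^H \mathbf{Y}\|^2$ under each hypothesis (conditioned on the channel realization $\mathbf{h}_{aw}$) and then read off the two error probabilities as the appropriate tail and body of that distribution evaluated at $\theta^*$. First I would expand the statistic column-wise: writing $\mathbf{Y} = [\mathbf{y}_1, \dots, \mathbf{y}_{L_M}]$ gives $T = \sum_{l=1}^{L_M} |z_l|^2$, where $z_l \triangleq \mathbf{h}_{aw}^H \mathbf{y}_l$ is a scalar. Since each $\mathbf{y}_l$ is a zero-mean circularly-symmetric complex Gaussian vector and a fixed linear functional of a Gaussian vector is again Gaussian, each $z_l$ is zero-mean complex Gaussian; moreover the $z_l$ are mutually independent because the columns $\mathbf{y}_l$ are independent across the symbol index $l$.

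Next I would compute the variance of $z_l$ under each hypothesis as the quadratic form $\mathbf{h}_{aw}^H K \mathbf{h}_{aw}$ with $K$ the relevant covariance. Under $H_0$ this is $\mathbf{h}_{aw}^H \mathbf{I}_M \mathbf{h}_{aw} = \|\mathbf{h}_{aw}\|^2$, while under $H_1$, substituting $K_1 = P_a \mathbf{h}_{aw}\mathbf{h}_{aw}^H + \mathbf{I}_M$ and simplifying yields $\|\mathbf{h}_{aw}\|^2 (P_a \|\mathbf{h}_{aw}\|^2 + 1)$. Because $|z_l|^2$ for a $\mathcal{CN}(0,\sigma^2)$ variable is exponentially distributed with mean $\sigma^2$, the statistic $T$ is a sum of $L_M$ i.i.d. exponentials, i.e.\ it follows an Erlang (Gamma) distribution with shape parameter $L_M$ and scale equal to the common per-symbol variance. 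Its CDF is therefore the regularized lower incomplete Gamma function $\gamma(L_M, \cdot)/\Gamma(L_M)$.

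Finally I would assemble the error probabilities. The false-alarm event is $\{T > \theta^*\}$ under $H_0$, so $\mathbb{P}_{FA}$ is one minus the Erlang CDF with scale $\|\mathbf{h}_{aw}\|^2$ evaluated at $\theta^*$, giving the stated expression; the missed-detection event is $\{T < \theta^*\}$ under $H_1$, so $\mathbb{P}_{MD}$ is the Erlang CDF with scale $\|\mathbf{h}_{aw}\|^2(P_a\|\mathbf{h}_{aw}\|^2 + 1)$ evaluated at $\theta^*$, matching the claim. The only genuinely delicate step is the variance computation under $H_1$ and, more importantly, the recognition that the sum of squared-magnitudes of the $z_l$ collapses to a single incomplete Gamma expression; the remainder is bookkeeping. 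I would take care to justify the independence of the $z_l$ explicitly, since it is precisely what turns the sum into an Erlang distribution rather than a more general quadratic-form distribution, and this is the hinge on which the clean closed forms rest.
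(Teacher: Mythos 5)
Your proof is correct and follows essentially the same route as the paper: both characterize the conditional distribution of $\|\mathbf{h}_{aw}^H \mathbf{Y}\|^2$ under each hypothesis (the paper phrases it as a scaled $\chi^2_{2L_M}$, you as an Erlang with shape $L_M$ --- the same distribution) and then evaluate the tail and CDF at $\theta^*$. Your explicit justification of the independence of the $z_l$ and the per-symbol variance computation $\mathbf{h}_{aw}^H K_1 \mathbf{h}_{aw} = \|\mathbf{h}_{aw}\|^2(P_a\|\mathbf{h}_{aw}\|^2+1)$ merely fills in steps the paper leaves implicit.
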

\begin{proof}
Under hypothesis $H_0$, $\mathbf{y}_l$ has a distribution given by $\mathcal{CN}(\mathbf{0}, \mathbf{I}_M )$. Conditioned on the known channel coefficients from Alice, the vector $\mathbf{h}_{aw}^H \mathbf{Y}$ has a complex Gaussian distribution given by $\mathcal{CN} (\mathbf{0}, \| \mathbf{h}_{aw} \|^2 \mathbf{I}_{L_M} )$. Thus,
\begin{equation}
\| \mathbf{h}_{aw}^H\mathbf{Y}\|^2 \sim \| \mathbf{h}_{aw} \|^2  \chi_{2L_M}^{2},
\end{equation}
where $\chi_{2L_M}^{2}$ denotes a chi-squared random variable (RV) with $2L_M$ degrees of freedom. Hence,
\begin{equation}
\begin{aligned}
\mathbb{P}_{FA} = \mathcal{P} \left[ \| \mathbf{h}_{aw}^H\mathbf{Y}\|^2 > \theta^* | H_0  \right].
\end{aligned}
\end{equation}
Under hypothesis $H_1$, $\mathbf{y}_l \sim \mathcal{CN}(\mathbf{0}, P_a \mathbf{h}_{aw}\mathbf{h}_{aw}^H + \mathbf{I}_M)$, and conditioned on the known channels, the distribution of  $\mathbf{h}_{aw}^H \mathbf{Y}$ is given by $ \mathcal{CN}\left(\mathbf{0}, \|\mathbf{h}_{aw} \|^2 (\|\mathbf{h}_{aw} \|^2 P_a + 1)\mathbf{I}_{L_M} \right)$. As a result, we have
 \begin{equation}
\| \mathbf{h}_{aw}^H\mathbf{Y}\|^2 \sim (\|\mathbf{h}_{aw} \|^2 (\|\mathbf{h}_{aw} \|^2 P_a + 1)) \chi_{2L_M}^{2},
\end{equation}
 giving
\begin{equation}
\mathbb{P}_{MD} = \mathcal{P} \left[ \| \mathbf{h}_{aw}^H\mathbf{Y}\|^2 \leq \theta^* | H_1  \right].
\end{equation}
Calculating the probabilities in (17) and (19) gives the desired result, hence completing the proof.
\end{proof}

We note here that the analysis of the optimal detector at Willie under finite blocklength in fading scenarios is quite different as compared to the infinite blocklength case. As highlighted in \cite{zheng2019multi}, for infinite observations at the adversary, uncertainties of transmitted signals and receiver noise vanish and the analysis is simplified, whereas this does not hold under the finite blocklength scenario.

\section{Alice's Approach to Achieve Covertness}
Due to the involvement of incomplete Gamma function, the detection performance at Willie does not lend itself well for further analysis. To proceed, we lower bound Willie's detection performance using Pinsker's inequality \cite{lehmann2006testing}, giving
\begin{equation}
\mathbb{P}_{FA} + \mathbb{P}_{MD} \geq 1 - \sqrt{\frac{1}{2} \mathcal{D}\left( \mathbb{P}_0^{L_M} || \mathbb{P}_1^{L_M}  \right)},
\end{equation}
where $\mathbb{P}_0^{L_M}$ and $\mathbb{P}_1^{L_M}$ denote the probability density functions (pdfs) of Willie's observation vectors under hypothesis $H_0$ and $H_1$ for $L_M$ independent channel uses, respectively, and $\mathcal{D} ( \mathbb{P}_0^{L_M} || \mathbb{P}_1^{L_M} )$ is the Kullback-Leibler (KL) Divergence from $\mathbb{P}_0^{L_M}$ to $\mathbb{P}_1^{L_M}$. As per \cite{yan2019delay}, $\mathcal{D}( \mathbb{P}_0^{L_M} || \mathbb{P}_1^{L_M} ) \leq 2 \epsilon^2$ must be ensured to guarantee $\mathbb{P}_{FA} + \mathbb{P}_{MD} \geq 1 - \epsilon$. Since Alice is unaware of her channels to Willie, she looks to minimize the expected value of $\mathcal{D}( \mathbb{P}_0^{L_M} || \mathbb{P}_1^{L_M} )$ over all possible realizations of her channels to Willie. The optimization problem at Alice is thus stated as:
\begin{equation}
\begin{aligned}
\text{P1} \quad  \underset{L_M, P_a}{\mathrm{maximize}} \quad &L_M R (1-\delta) \\
\mathrm{subject~to} \quad  &\mathbb{E}_{\|\mathbf{h}_{aw} \|^2} \left[ \mathcal{D}\left( \mathbb{P}_0^{L_M} || \mathbb{P}_1^{L_M}  \right) \right] \leq 2 \epsilon^2  , \\
&L_M \leq L_{\text{max}}.
\end{aligned}
\end{equation}
where $\mathbb{E}\left[ \cdot \right]$ denotes the statistical expectation, and is taken over all the channels from Alice to the multiple antennas at Willie. The expression for the KL-divergence for $M \geq 1$ is given as
\begin{equation}
\begin{aligned}
\mathcal{D}&\left( \mathbb{P}_0^{L_M} || \mathbb{P}_1^{L_M}  \right) = L_M  \mathcal{D}\left( \mathbb{P}_0 || \mathbb{P}_1  \right) \\
&=L_M \left( \ln \left(\|\mathbf{h}_{aw} \|^2 P_a + 1 \right) - \frac{ \|\mathbf{h}_{aw} \|^2 P_a }{\|\mathbf{h}_{aw} \|^2 P_a  + 1} \right) ,
\end{aligned}
\end{equation}
where, $\|\mathbf{h}_{aw} \|^2$ is the sum of $M$ independent exponential RVs, and for $M>1$, constitutes an Erlang RV, with pdf given by
\begin{equation}
\begin{aligned}
f_{\|\mathbf{h}_{aw} \|^2}(h; M, \lambda) &= E_M(h, \lambda)  \\
&=\frac{\lambda^M h^{M-1} e^{-\lambda h}}{\left(M-1 \right) !};  \quad h,\lambda \geq 0 .
\end{aligned}
\end{equation}

In the following, we present the optimal choice of $P_a$ and $L_M$ for Alice to achieve a desired level of covertness.
\begin{prop}
The optimal number of Alice's channel uses, $L_M^*$, in terms of Alice's transmit power, maximizing the throughput to Bob while satisfying a given covertness requirement, $\epsilon$, is given  by
\begin{equation}
L_M^* = \min \left(L_{\text{max}}, \frac{2 \epsilon^2}{g(P_a)} \right)  ,
\end{equation}
while the optimal transmit power at Alice is the solution to
\begin{equation}
\begin{aligned}
\underset{P_a}{\mathrm{maximize}} \quad & L_M^* R (1-\delta) ,
\end{aligned}
\end{equation}
where $g(P_a)$ is as given in (28).
\end{prop}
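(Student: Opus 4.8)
The plan is to decouple the two optimization variables in P1 by observing that the covertness constraint couples $L_M$ and $P_a$ only through a single product, so that for each fixed $P_a$ the feasible values of $L_M$ form an interval on which the objective is monotone. First I would evaluate the expectation appearing in the constraint of P1. Since the $L_M$-use divergence in (22) already factorizes as $L_M\,\mathcal{D}(\mathbb{P}_0\|\mathbb{P}_1)$ with the factor $L_M$ deterministic, the expectation over the Erlang-distributed $\|\mathbf{h}_{aw}\|^2$ of (23) passes through it. I would therefore define the expected per-symbol divergence
\[
g(P_a) \triangleq \mathbb{E}_{\|\mathbf{h}_{aw}\|^2}\!\left[\ln\!\left(\|\mathbf{h}_{aw}\|^2 P_a + 1\right) - \frac{\|\mathbf{h}_{aw}\|^2 P_a}{\|\mathbf{h}_{aw}\|^2 P_a + 1}\right],
\]
which depends only on $P_a$ for fixed $M$ and $\lambda$, its explicit form being (28). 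The covertness constraint then collapses to $L_M\, g(P_a) \leq 2\epsilon^2$, i.e. $L_M \leq 2\epsilon^2/g(P_a)$, and together with $L_M \leq L_{\text{max}}$ this describes the entire feasible set of $L_M$ for a given $P_a$.

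The second, and \emph{key}, step is to establish that the objective $L_M R(1-\delta)$ is strictly increasing in $L_M$ for any fixed $P_a$ (and fixed $|h_{ab}|^2$, which Alice knows). In the rate expression (2) the capacity term $\log_2(1+P_a|h_{ab}|^2)$ is independent of $L_M$, while the finite-blocklength back-off term is proportional to $\sqrt{1/L_M}$ and hence strictly decreasing in $L_M$; thus $R$ itself strictly increases with $L_M$. Since $1-\delta>0$ and $R\geq 0$ in the operating regime, I would conclude via $\partial(L_M R)/\partial L_M = R + L_M\,\partial R/\partial L_M > 0$ that the throughput is strictly increasing in $L_M$.

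Given this monotonicity, the optimal $L_M$ for each $P_a$ is the largest feasible one, namely $L_M^* = \min\!\left(L_{\text{max}},\, 2\epsilon^2/g(P_a)\right)$, which is precisely (24). Substituting this $P_a$-dependent closed form back into the objective eliminates $L_M$ and reduces the joint problem P1 to the one-dimensional maximization of $L_M^* R(1-\delta)$ over $P_a$, establishing the second claim (25).

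I expect the monotonicity argument to be the main obstacle, since $L_M$ enters the objective both as an explicit multiplicative factor and implicitly through $R$. The result holds because these two dependencies reinforce rather than oppose one another, but I would need to confirm that $R$ remains nonnegative so that the product of increasing factors is itself increasing; in the regime where the back-off term would drive $R$ below zero the throughput is taken to be zero and the claim is trivial.
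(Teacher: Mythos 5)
Your proposal is correct and follows essentially the same route as the paper: pull the expectation through the deterministic factor $L_M$ to get the constraint $L_M\,g(P_a)\le 2\epsilon^2$, use monotonicity of the throughput in $L_M$ to saturate the tighter of the two upper bounds, and reduce P1 to a one-dimensional search over $P_a$. Your explicit justification that $L_M R$ is increasing in $L_M$ (via the $1/\sqrt{L_M}$ back-off term and $R\ge 0$) is a welcome elaboration of a step the paper merely asserts.
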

\begin{proof}
The expectation in (21) for $M>1$ is calculated as 
\begin{equation}
\begin{aligned}
\mathbb{E}_{\|\mathbf{h}_{aw} \|^2} \left[ \mathcal{D}\left( \mathbb{P}_0^{L_M} || \mathbb{P}_1^{L_M}  \right) \right]  &= L_M g(P_a) ,
\end{aligned}
\end{equation}
where the function $g(P_A)$ and meijer-G$(a,b,c,d,z)$ within $g(P_a)$ are defined as in (28) and (29), respectively \cite{matlab}. Due to the covertness constraint, P1 is now restated as
\begin{equation}\label{p1}
\begin{aligned}
\text{P1.1} \quad  \underset{L_M, P_a}{\mathrm{maximize}} \quad &  L_M R(1-\delta)\\
\mathrm{subject~to} \quad  & L_M \leq \min \left(L_{\text{max}}, \frac{2 \epsilon^2}{g(P_a)} \right) .
\end{aligned}
\end{equation}
Since $R$ is an increasing function of $P_a$ and $L_M$, the covertness requirement puts an upper limit on the number of channel uses by Alice in a block for a given $P_a$. The upper limit for $L_M$ is thus determined by the tighter bound between the covertness constraint and $L_{\text{max}}$. Accommodating the optimal $L_M$ results in the optimization problem in (25) which is of one dimension and can be solved by methods of efficient numerical search, hence concluding the proof.

 \begin{figure*}
\begin{align}
g(P_a) &= \frac{ \lambda^M }{\left(M-1 \right) !} \Bigg[\frac{1}{\lambda^M} \left(P_a \right)^2 \text{meijer-G} \left\{0, 1, [0,0,M],[],\frac{\lambda}{P_a}   \right\}  -  P_a^{(2-M)} \Gamma(M+1) e^{\frac{\lambda}{P_a}} \gamma (-M, \frac{\lambda}{P_a})   \Bigg] \\ \vspace{0.2cm}
\text{meijer-G}(a,b,c,d,z) &=  \text{meijer-G}([a_1, \dots, a_n],[a_{n+1}, \dots, a_p],[b_1, \dots, b_m],[b_{m+1}, \dots, b_q],z) \nonumber \\
&=G_{p,q}^{m,n}\left(a_1, \dots, a_p, b_1, \dots, b_q | z \right) \nonumber \\
&= \frac{1}{2 \pi i} \int \frac{\left(\prod_{j=1}^{m} \Gamma(b_j - s) \right) \left( \prod_{j=1}^{n} \Gamma(1-a_j + s)  \right)}{\left( \prod_{j=m+1}^{q} \Gamma(1-b_j + s)  \right) \left( \prod_{j=n+1}^{p} \Gamma(a_j - s)  \right)} z^s ds
\end{align}
\hrulefill
\end{figure*}
\end{proof}

We note from (24) that the optimal choice of blocklength, $L_M^*$, is a decreasing function of the transmit power used by Alice and is upper-bounded by the delay constraint determined by $L_{max}$. This fact will be more evident and explained in the numerical results section.

\begin{cor}
The optimal number of Alice's channel uses, for $M=1$, in terms of Alice's transmit power, maximizing the throughput to Bob while satisfying a given covertness requirement, is given  by
\begin{equation}
L_1^* = \min \left(L_{\text{max}}, \frac{2 \epsilon^2}{f(P_a)} \right),
\end{equation}
while the optimal transmit power at Alice can be found similarly as in Proposition $1$. Here, $f(P_a)$ is given as
\begin{equation}
\begin{aligned}
f(P_a) = - \left[ 1 + \left(1+ \frac{\lambda}{P_a} \right) e^{\frac{\lambda }{P_a}} Ei \left( - \frac{\lambda }{P_a} \right) \right],
\end{aligned}
\end{equation}
and $Ei (\cdot)$ is the Exponential Integral function. 
\end{cor}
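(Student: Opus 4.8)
The corollary is the $M=1$ specialization of Proposition 1, so the plan is to reuse that proof and verify only the one expression that genuinely changes. For $M=1$ the Erlang variable $\|\mathbf{h}_{aw}\|^2$ collapses to the single exponential variable $|h_{aw}|^2$ with pdf $\lambda e^{-\lambda h}$, $h\geq 0$ (equation (23) with $M=1$). Consequently the only new computation is to show that the expectation of the per-symbol KL divergence in (22) over this exponential distribution equals the claimed $f(P_a)$ in (31). Once that closed form is in hand, the covertness constraint $\mathbb{E}[\mathcal{D}(\mathbb{P}_0^{L_M}\|\mathbb{P}_1^{L_M})]\leq 2\epsilon^2$ becomes $L_1 f(P_a)\leq 2\epsilon^2$, and the bound (30) on $L_1$ together with the one-dimensional search over $P_a$ follow word for word from Proposition 1, with $f(P_a)$ replacing $g(P_a)$.

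First I would write $f(P_a)=\mathbb{E}_{|h_{aw}|^2}[\mathcal{D}(\mathbb{P}_0\|\mathbb{P}_1)]=I_1-I_2$, where $I_1=\int_0^\infty \ln(hP_a+1)\,\lambda e^{-\lambda h}\,dh$ and $I_2=\int_0^\infty \frac{hP_a}{hP_a+1}\,\lambda e^{-\lambda h}\,dh$. The computational kernel of the whole corollary is the single identity $\int_0^\infty \frac{\lambda e^{-\lambda h}}{hP_a+1}\,dh=-\frac{\lambda}{P_a}\,e^{\lambda/P_a}\,Ei\!\left(-\frac{\lambda}{P_a}\right)$, which I would establish by the shift $t=h+1/P_a$ followed by $s=\lambda t$, recognizing the resulting tail integral as $E_1(\lambda/P_a)=-Ei(-\lambda/P_a)$.

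Given this identity, both integrals collapse immediately. For $I_1$, integrating by parts with $u=\ln(hP_a+1)$ and $dv=\lambda e^{-\lambda h}\,dh$ annihilates the boundary terms (the logarithm vanishes at $h=0$ and is overwhelmed by the exponential as $h\to\infty$) and leaves $P_a$ times the kernel integral, giving $I_1=-e^{\lambda/P_a}Ei(-\lambda/P_a)$. For $I_2$, the algebraic split $\frac{hP_a}{hP_a+1}=1-\frac{1}{hP_a+1}$ reduces it to $1$ minus the kernel integral, giving $I_2=1+\frac{\lambda}{P_a}e^{\lambda/P_a}Ei(-\lambda/P_a)$. Subtracting and factoring $e^{\lambda/P_a}Ei(-\lambda/P_a)$ out of the two surviving terms yields exactly (31).

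I expect the main obstacle to be purely technical exponential-integral bookkeeping: keeping the sign and argument conventions of $Ei(\cdot)$ consistent throughout (note that $Ei(-x)<0$ for $x>0$, so the assembled $f(P_a)$ is positive, as any KL divergence must be), and checking that the boundary term in the integration by parts for $I_1$ really does vanish at both the origin and at infinity. There is no additional optimization content beyond Proposition 1; the corollary merely records the elementary closed form that takes the place of the Meijer-G expression (28)--(29) once the Erlang degenerates to an exponential.
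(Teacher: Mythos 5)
Your proposal is correct, and it is in fact more explicit than what the paper provides: the paper's entire proof of the corollary is the single sentence that the result ``follows by simply putting $M=1$'' into Proposition 1, which, taken literally, would require reducing the Meijer-G expression for $g(P_a)$ in (28) to the exponential-integral form $f(P_a)$ in (31) --- a reduction the paper never carries out. You instead bypass the Meijer-G machinery entirely and compute $\mathbb{E}_{|h_{aw}|^2}[\mathcal{D}(\mathbb{P}_0\|\mathbb{P}_1)]$ directly against the exponential density, via the single kernel identity $\int_0^\infty \frac{\lambda e^{-\lambda h}}{hP_a+1}\,dh = -\frac{\lambda}{P_a}e^{\lambda/P_a}Ei\left(-\frac{\lambda}{P_a}\right)$, integration by parts for the logarithmic term, and the split $\frac{hP_a}{hP_a+1}=1-\frac{1}{hP_a+1}$. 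I checked the three pieces: $I_1=-e^{\lambda/P_a}Ei(-\lambda/P_a)$, $I_2=1+\frac{\lambda}{P_a}e^{\lambda/P_a}Ei(-\lambda/P_a)$, and $I_1-I_2$ assembles to exactly (31); the boundary terms in the integration by parts do vanish since $\ln(hP_a+1)=O(h)$ near $0$ and $e^{-\lambda h}\ln(hP_a+1)\to 0$ at infinity. Your route is elementary and self-contained, and as a bonus makes the positivity of $f(P_a)$ transparent (since $Ei(-x)<0$ for $x>0$); the paper's route, had it been spelled out, would buy consistency with the general-$M$ formula but at the cost of nontrivial special-function bookkeeping. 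The remaining optimization content (the $\min$ in (30) and the one-dimensional search over $P_a$) indeed carries over verbatim from Proposition 1, as you say.
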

\begin{proof}
The proof follows by simply putting $M=1$ in the results given in Proposition $1$.
\end{proof}

\section{Results And Discussions}

\begin{figure}[t!]
\centering
	\includegraphics[width=0.9\linewidth]{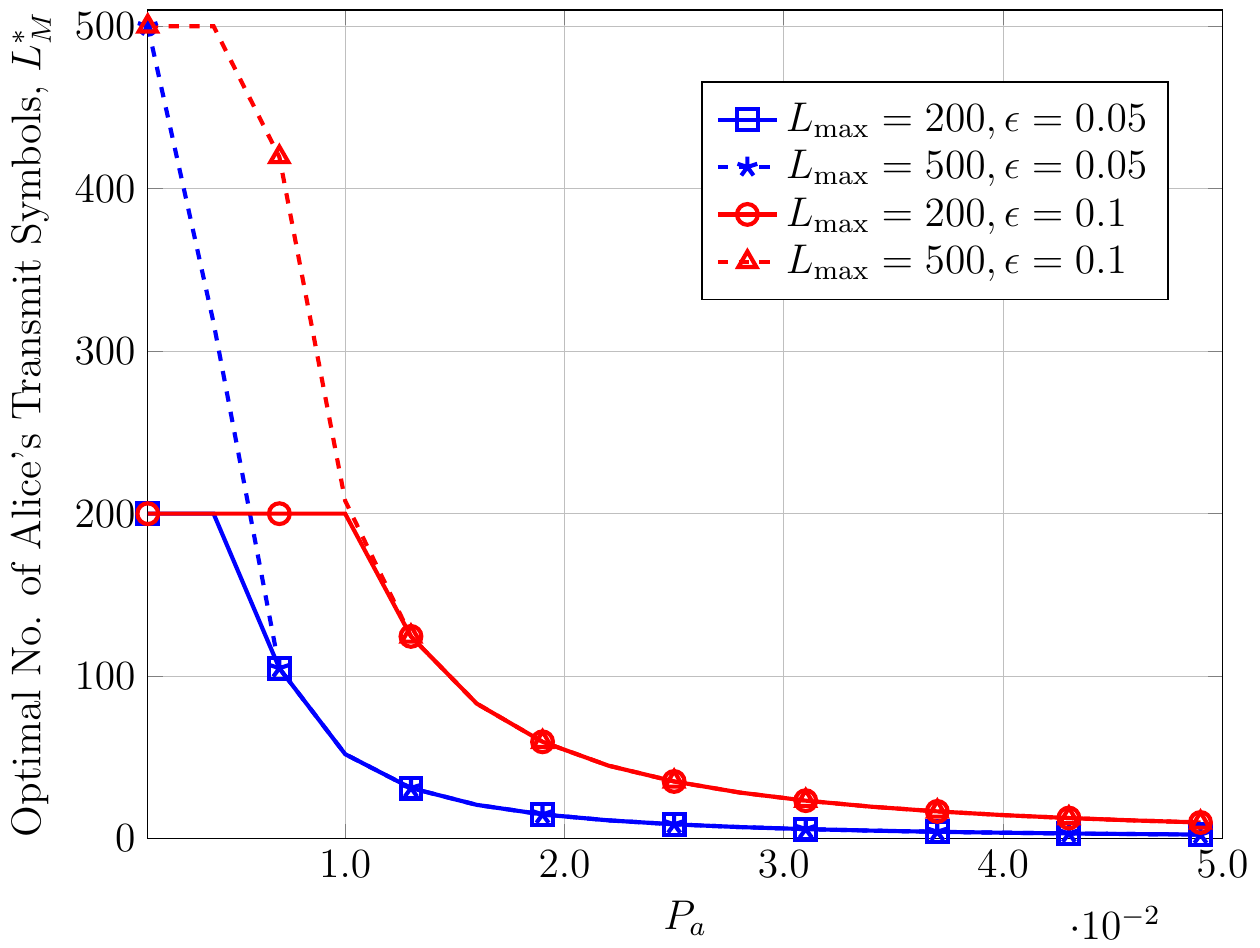}
	\caption{Number of Alice's optimal transmit symbols, $L_M^*$, versus the transmit power, $P_a$.}
	\label{fig01}
\vspace{-0.2cm}
\end{figure}

In this section, we present numerical results examining the effect of reliability and covertness on the optimal parameters for Alice and the achievable covert throughput. Unless otherwise stated, the parameters are set as follows: $\delta=0.1$, $\lambda = 1$, and $|h_{ab}|^2=1$. It should be noted here that due to the assumption of Bob and Willie's receiver noise being $\mathcal{CN}(0,1)$, the transmit power at Alice has units relative to the considered additive noise.

Fig. \ref{fig01} shows the optimal number of Alice's transmit symbols against a range of her transmit power, $P_a$, for different values of $L_{\text{max}}$ and $\epsilon$. First, the effect of limiting Alice's maximum transmit symbols is evident for lower values of $P_a$, as decreasing $P_a$ allows for Alice to use a higher blocklength under a certain covertness constraint. Furthermore, for a given value of $\epsilon$, an increase in $P_a$ requires using lower number of transmit symbols for a given covertness requirement which is in agreement with the intuition that if Alice chooses to transmit at a higher power, the transmission time should be reduced. This also represents a trade-off between achieving covertness and maximizing the throughput since the covert throughput is an increasing function of both $P_a$ and $L_M$, while achieving covertness requires a decrease in both.

In Fig. \ref{fig02}, we show the achievable throughput against a multi-antenna Willie for varying covertness requirements under the optimal choice of $L_M$ and $P_a$. We also show the achievable throughput for the case where $L_M$ is fixed while the transmit power is optimized to satisfy a required covert criteria. It should be noted here that in calculating these throughput results, we consider a lower bound on $L_M$, which is due to the use of approximated expression of $R$ as given in (2), and this bound ensures that the calculated rate is always non-negative. We note that $\epsilon=0.3$ represents a relatively relaxed covertness requirement.

As we can see, the difference between utilizing the optimal and fixed value of $L_M$ is evident in terms of the difference in throughput from Alice to Bob. Furthermore, as the number of antennas at Willie grows beyond a single antenna, there is a very sharp decrease in the achievable covert throughput in both cases and depending on the covertness requirement, it reaches zero very quickly. This shows the effectiveness of Willie in detecting any covert transmissions using more than one antenna. As we see, even for $\epsilon = 0.3$, the throughput decreases fairly quickly to zero and hence covertness can not be achieved beyond $M=16$.

\begin{figure}[t!]
\centering
	\includegraphics[width=0.9\linewidth]{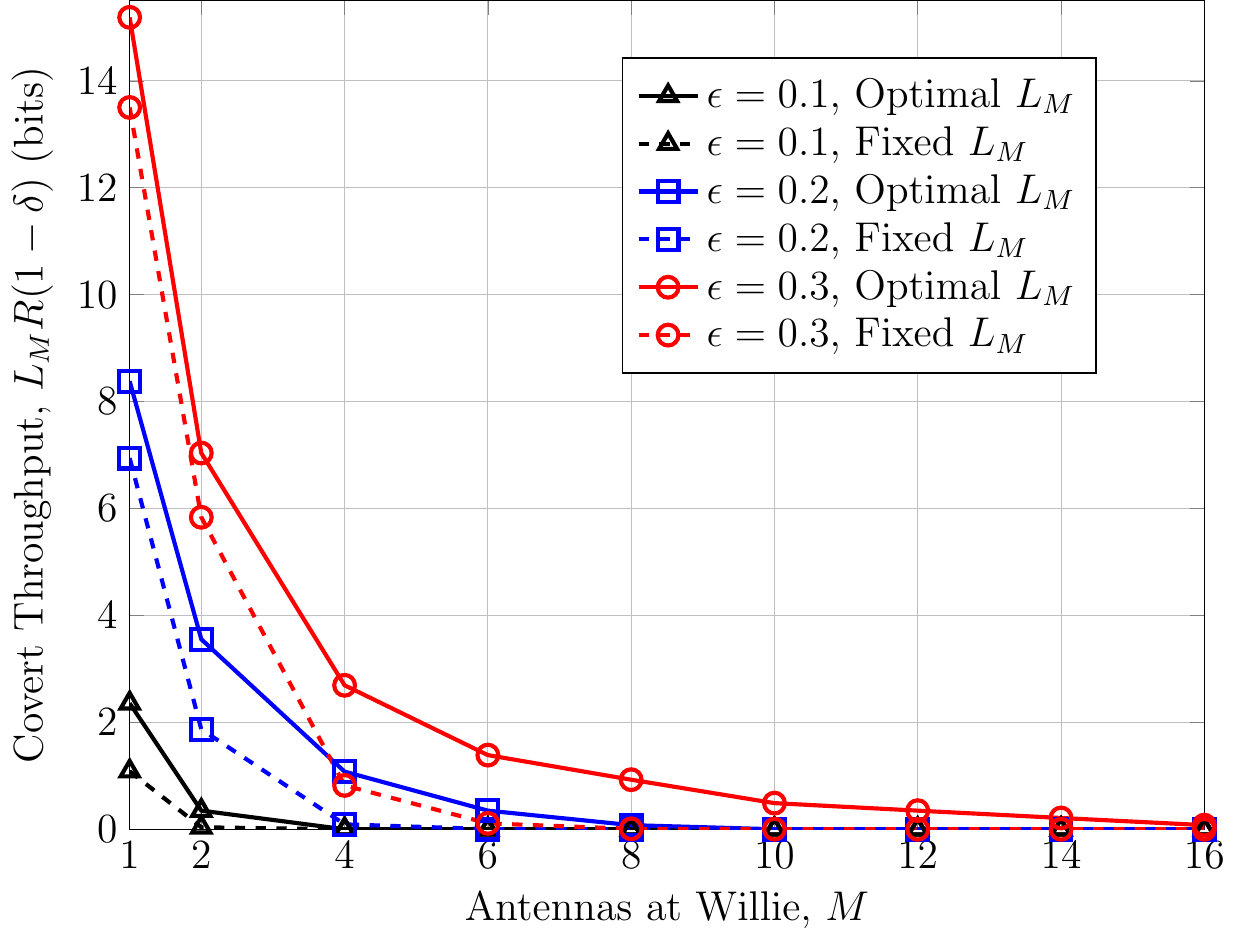}
	\caption{Achievable covert throughput from Alice to Bob, $L_M R(1-\delta)$, versus the number of antennas at Willie, $M$, for varying covertness requirements.}
	\label{fig02}
\vspace{-0.2cm}
\end{figure}

\section{Conclusion}
In this work, we have considered the performance of covert communication in the presence of a multi-antenna adversary while under strict delay constraints. We have analyzed the effect of covertness requirement and the number of antennas at Willie on Alice's achievable covert throughput. It has been shown that the improved detection capability of Willie drastically reduces the covert throughput. This letter presented an initial work on covert communication under delay constraints and in the presence of a multi-antenna adversary. Future work will focus on devising improved covertness schemes for achieving non-zero throughput despite the higher number of antennas at Willie, and under more complex fading models encompassing both LOS and NLOS scenarios.

\end{document}